\def\final{0}
\DeclareSymbolFont{rsfscript}{OMS}{rsfs}{m}{n}
\DeclareSymbolFontAlphabet{\mathrsfs}{rsfscript}
\newtheorem{definitionenv}{Definition}
\newtheorem{lemmaenv}[definitionenv]{Lemma}
\newtheorem{theoremenv}[definitionenv]{Theorem}
\newtheorem{corollaryenv}[definitionenv]{Corollary}
\newtheorem{propositionenv}[definitionenv]{Proposition}
\newtheorem{conjectureenv}[definitionenv]{Conjecture}
\newtheorem{remarkenv}[definitionenv]{Remark}
\newenvironment{remark}{\begin{remarkenv}\rm}{\end{remarkenv}}
\newcommand{\br}{\begin{remark}}
\newcommand{\er}{\end{remark}}
\newtheorem{exampleenv}{Example}
\newtheorem{app-lemmaenv}[section]{Lemma}
\newenvironment{definition}{\begin{definitionenv}\rm}{\end{definitionenv}}
\newenvironment{lemma}{\begin{lemmaenv}\rm}{\end{lemmaenv}}
\newenvironment{theorem}{\begin{theoremenv}\rm}{\end{theoremenv}}
\newenvironment{corollary}{\begin{corollaryenv}\rm}{\end{corollaryenv}}
\newenvironment{example}{\begin{exampleenv}\rm}{\end{exampleenv}}
\newenvironment{proposition}{\begin{propositionenv}\rm}{\end{propositionenv}}
\newenvironment{conjecture}{\begin{conjectureenv}\rm}{\end{conjectureenv}}
\newenvironment{app-lemma}{\begin{app-lemmaenv}\rm}{\end{app-lemmaenv}}
\newcommand{\bd}{\begin{definition}}
\newcommand{\ed}{\end{definition}}
\newcommand{\bl}{\begin{lemma}}
\newcommand{\el}{\end{lemma}}
\newcommand{\elp}{\hspace*{\fill} $\Box$
                 \end{lemma}}
\newcommand{\bt}{\begin{theorem}}
\newcommand{\et}{\end{theorem}}
\newcommand{\etp}{\hspace*{\fill} $\Box$
                 \end{theorem}}
\newcommand{\bc}{\begin{corollary}}
\newcommand{\ec}{\end{corollary}}
\newcommand{\ecp}{\hspace*{\fill} $\Box$
                 \end{corollary}}
\newcommand{\bcj}{\begin{conjecture}}
\newcommand{\ecj}{\end{conjecture}}
\newcommand{\be}{\begin{example}}
\newcommand{\ee}{\end{example}}
\newcommand{\eep}{\hspace*{\fill} $\Box$
                 \end{example}}
\newcommand{\bp}{\begin{proposition}}
\newcommand{\ep}{\end{proposition}}
\newcommand{\epp}{
                 \end{proposition}}
\newcommand{\bra}[1]{\langle#1|}
\newcommand{\ket}[1]{|#1\rangle}
\newcommand{\braket}[2]{\langle#1|#2\rangle}
\newcommand{\tr}[1]{\text{tr}\left(#1\right)}
\newcommand{\eeq}{ \setcounter{equation} {\value{enumi}}}
\newcommand{\ot}{\otimes}
\newcommand{\I}{\mathsf{id}}
\newcommand{\Hmin}{H_{\rm min}}
\newcommand{\cA}{\mathcal{A}}
\newcommand{\cB}{\mathcal{B}}
\newcommand{\cE}{\mathcal{E}}
\newcommand{\cR}{\mathcal{R}}
\newcommand{\cX}{\mathcal{X}}
\newcommand{\cY}{\mathcal{Y}}
\newcommand{\cZ}{\mathcal{Z}}
\newcommand{\rA}{\mathrsfs{A}}
\newcommand{\rB}{\mathrsfs{B}}
\def\tr{\textnormal{tr}}
\def\beq{\begin{equation}}
\def\eeq{\end{equation}}
\def\bean{\begin{IEEEeqnarray*}{rCl}}
\def\eean{\end{IEEEeqnarray*}}
\newcommand{\id}{\text{id}}
\newcommand{\etal}{\emph{et~al.}}
\newcommand{\zo}{\{0,1\}}
\newcommand{\mynote}[2]{{\color{#1} \marginpar{\tiny #2}}}
\newcommand{\mybignote}[2]{{\color{#1} $\langle \langle$ #2$\rangle \rangle$}}
\newcommandx{\rednote}[2][1=]{\todo[linecolor=red,backgroundcolor=red!25,bordercolor=red,#1]{#2}}
\newcommandx{\bluenote}[2][1=]{\todo[linecolor=blue,backgroundcolor=blue!25,bordercolor=blue,#1]{#2}}
\newcommandx{\yellownote}[2][1=]{\todo[linecolor=yellow,backgroundcolor=yellow!25,bordercolor=yellow,#1]{#2}}
\newcommandx{\greennote}[2][1=]{\todo[inline,linecolor=olive,backgroundcolor=green!25,bordercolor=olive,#1]{#2}}
\newcommand{\mynote}[2]{}
\newcommand{\mybignote}[2]{}
\newcommand{\rednote}[2][1=]{}
\newcommand{\bluenote}[2][1=]{}
\newcommand{\greennote}[2][1=]{}
\newcommand{\yellownote}[2][1=]{}
\begin{document}
\title{Interactive Leakage Chain Rule for Quantum Min-entropy}
\author{Ching-Yi Lai and Kai-Min Chung
\thanks{

CYL is with the  Institute of Communications Engineering,
        National Chiao Tung University,
        Hsinchu 30010, Taiwan.
(email: cylai@nctu.edu.tw)

KMC is with the Institute of Information Science, Academia Sinica,
Nankang, Taipei 11529, Taiwan.
(email: kmchung@iis.sinica.edu.tw)
}
}



\maketitle

\begin{abstract}
The leakage chain rule for quantum min-entropy quantifies the change of min-entropy when one party gets additional leakage
about the information  source. Herein we provide an interactive version that quantifies the change of min-entropy between two parties, who share an initial
classical-quantum state and are allowed to run a two-party protocol. As an application, we prove new versions of lower bounds on the complexity of quantum communication of classical information.
\end{abstract}

\section{Introduction}

Let $(X,Y,Z)$ be a classical distribution over $\zo^{n}\times\zo^m\times \zo^{\ell}$. (Classical) leakage chain rule states that
$$H(X|Y,Z) \geq H(X|Y) -  \ell,$$
which says that an $\ell$-bit ``leakage'' $Z$ can decrease the  entropy of $X$ (conditioned on $Y$) by at most $\ell$. Note that the statement is different from the standard chain rule for Shannon entropy that $H(X,Y) = H(X) + H(Y|X)$. Leakage chain rule generally holds  for various entropy notions and is especially useful for cryptographic applications. In particular, a computational leakage chain rule for computational min-entropy, first proved by~\cite{DziembowskiP08,ReingoldTTV08}, has found several applications in classical cryptography~\cite{GentryW11,ChungKLR11,FOR12,JP14,ChungLP15}.

The notion of (smooth) min- and max- entropies  in the quantum setting are proposed by Renner and Wolf~\cite{RW04}. The  leakage chain rule for quantum min-entropy has also been discussed and is more complicated than its classical analogue due to the effect of quantum entanglement. Consider a state $\rho_{XYZ}$ on the state space $\cX \ot \cY \ot \cZ$, where $Z$ is an $\ell$-qubit system. The  leakage chain for quantum min-entropy states that
\begin{equation} \label{eqn:LCR}
 \Hmin(X|Y,Z)_\rho \geq
\begin{cases}
\Hmin(X|Y)_\rho - \ell,  \mbox{\quad \ \ if $\rho$ is a separable state on $(\cX \ot \cY)$ and $\cZ$} \\
\Hmin(X|Y)_\rho - 2\ell, \mbox{\quad otherwise.}
\end{cases}
\end{equation}
In other words, the leakage $Z$ can decrease the quantum min-entropy of $X$ conditioned on $Y$ by at most $\ell$ if there is no entanglement, and $2\ell$ in general. Note that the factor of $2$ is tight by
 the application of superdense coding~\cite{BW92}. The separable case is proved by Desrosiers and Dupuis~\cite{DD10}, and the general case is proved by Winkler \etal~\cite{WTHR11}, both of which are motivated by cryptographic applications. Furthermore, a computational version of quantum leakage chain rule is explored in~\cite{CCL+17} with applications in quantum leakage-resilient cryptography.

Herein we formulate an interactive version of leakage chain rule with initial classical-quantum (cq) states. Let $\rho_{XY}$ be a cq-state shared between Alice and Bob. Consider that $X$ is a classical input from Alice. Then Alice and Bob engage in an interaction where Alice may leak information about $X$ to Bob. We are interested in how much leakage is generated from the interaction regarding to the communication complexity of the interaction.
We restrict the discussion to the situation where $X$ is a classical input that remains constant during the interaction.
 This is formalized by allowing Alice to perform only quantum operations controlled by $X$ on her system.

\begin{theorem}[Interactive leakage chain rule for quantum min-entropy] \label{thm:LCR-informal}
  Suppose Alice and Bob share a cq  state $\rho = \rho_{XY} \in D(\cX\ot\cY)$,
  where Alice holds the classical system $X$  and Bob holds the quantum system $Z$. 
  If an interactive protocol $\Pi$ is executed by Alice and Bob
 and $m_{B}$ and $m_{A}$ are the total numbers of qubits that Bob and Alice send to each other, respectively,
  then
    \begin{align}
    &\Hmin(X|Y)_{\sigma}\geq \Hmin(X|Y)_\rho - \min\{ m_{B}+m_{A},  2m_{A}\}, %
    \end{align}
    where    $\sigma_{XY}=\Pi(\rho_{XY})$ is the joint state at the end of the protocol.
\end{theorem}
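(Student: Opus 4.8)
The plan is to establish the two inequalities
$\Hmin(X|Y)_\sigma \ge \Hmin(X|Y)_\rho - 2m_A$ and
$\Hmin(X|Y)_\sigma \ge \Hmin(X|Y)_\rho - (m_A+m_B)$ separately; since $\min\{a,b\}$ is obtained by proving both an ``$\le a$'' and an ``$\le b$'' bound, these together give the theorem. Both are proved by following the quantity $\Hmin(X\,|\,\text{Bob's current register})$ message by message through the protocol, invoking the non-interactive chain rules of~\eqref{eqn:LCR} — the general factor-$2$ bound of Winkler~\etal\ and the factor-$1$ bound of Desrosiers--Dupuis for states separable across $(\cX\cY):\cZ$ — as black boxes, together with the data-processing inequality for $\Hmin$. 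I would first fix conventions: dilate so that every local operation is an isometry, have Bob retain all of his work qubits (discarding only increases $\Hmin(X|\cdot)$), and observe that the global state stays classical on $X$ throughout, since Alice's operations are $X$-controlled and Bob's are $X$-independent.

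For the $2m_A$ bound, write $B$ for Bob's register at a given point of the execution. Alice's local ($X$-controlled) operations leave the marginal on $X\otimes B$ untouched, hence do not change $\Hmin(X|B)$; Bob's local operations can only increase $\Hmin(X|B)$ (data processing); and a message from Bob to Alice removes qubits from $B$, which again only increases $\Hmin(X|B)$. Finally, when Alice sends Bob a $k$-qubit message $M$, the general leakage chain rule gives $\Hmin(X|BM) \ge \Hmin(X|B) - 2k$. Telescoping over Alice's messages, the total decrease is at most $2\sum_i k_i = 2m_A$, and since at the end $B$ is precisely Bob's final register $Y$ in $\sigma$, this is the first inequality.

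The $m_A+m_B$ bound is the heart of the matter and is where the cq structure of $\rho$ is used. The intuition is that Alice pays the ``superdense'' factor of $2$ on a message only to the extent that the message is entangled with Bob's register, and that since $\rho_{XY}$ is cq and Alice's work qubits start in a fixed state, Alice's side is initially uncorrelated with Bob's — so any exploitable entanglement has to be shipped in by Bob, a total of $m_B$ qubits' worth. I would implement this in two steps. First, a cq ``fresh-message'' refinement of the Desrosiers--Dupuis rule: if the state is cq on $X$ and, conditioned on $X=x$, the message register $M$ is in tensor product with $B$, then $\Hmin(X|BM) \ge \Hmin(X|B) - |M|$; this follows from $\Hmin(X|Z)=-\log\Pguess(X|Z)$ together with the operator inequality $\mu^x_M \preceq \I_M$ for the (conditionally product) reduced state of $M$, which lets one replace $M$ by a maximally mixed, uncorrelated register at the cost of a factor $2^{|M|}$. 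Second, an amortized version of the telescoping above: each qubit Bob sends to Alice deposits one ``credit,'' and Alice's $k$-qubit message is charged $k$ at the base rate plus one extra unit per qubit of the message that is entangled with $B$ (covered from credits), so the total extra charge never exceeds the $m_B$ credits ever deposited. Because $\min\{m_A+m_B,\,2m_A\}=m_A+\min\{m_A,m_B\}$, it in fact suffices to run this accounting when $m_B\le m_A$.

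The main obstacle is making the second step precise. The entanglement across the Alice--Bob cut is \emph{not} by itself bounded by $m_B$: Alice's own outgoing messages can also create cross-cut entanglement (e.g.\ she can mail out halves of locally prepared EPR pairs), so a naive ``cut entanglement $\le m_B$'' argument fails. What must be shown is the sharper fact that only Bob-supplied entanglement can be cashed in to beat the rate-$1$ bound — equivalently, one needs a refined leakage chain rule bounding the min-entropy loss of one of Alice's messages by its length plus a nonnegative quantity that, summed over the run, telescopes against Bob's total communication $m_B$, and one needs to verify the resulting credit invariant never goes negative. Pinning down the correct such quantity is exactly where both the cq assumption on $\rho$ and the restriction that Alice acts only by $X$-controlled operations are essential; once it is in place, the rest is the routine bookkeeping already used for the $2m_A$ bound, and the tightness of the result is witnessed by the superdense-coding protocol.
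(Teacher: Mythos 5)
Your $2m_A$ bound is correct and is essentially the paper's argument: Alice's $X$-controlled local operations leave the joint $X$--Bob marginal unchanged, Bob's local operations and Bob-to-Alice messages can only increase $\Hmin(X|\cdot)$ by data processing, and each $k$-qubit message from Alice to Bob costs at most $2k$ by Lemma~\ref{lemma:LCR2}; telescoping gives the loss $2m_A$.

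The $m_A+m_B$ bound, however, is not proved. You correctly diagnose why a naive per-message argument fails (cross-cut entanglement is not bounded by $m_B$, since Alice can mail out halves of locally prepared EPR pairs), and you propose to repair it with an amortized ``credit'' scheme --- but you never define the potential function, never state the refined per-message chain rule it must satisfy, and you explicitly defer exactly the step at which the cq assumption and the $X$-controlled restriction have to enter. That deferred step is the entire mathematical content of this half of the theorem, so what you have is a plan for a proof rather than a proof; it is also far from clear that a message-local potential of the kind you describe exists, because the quantity that actually does the work is global. The paper sidesteps amortization entirely via a variant of Yao's lemma (Lemma~\ref{lemma:Yao}): starting from the operator inequality $\rho_{XY}\le 2^{\lambda}\,\I_{X}/d_X\otimes\tau_Y$ furnished by Definition~\ref{def:minentropy}, one purifies $\tau_Y$, dilates the protocol to unitaries, and runs it on each classical branch $\ket{x}$; the resulting pure state has Schmidt rank at most $2^{m_A+m_B}$ across the Alice/Bob cut, with the crucial extra structure that \emph{Bob's Schmidt vectors are independent of $x$} (only the coefficients and Alice's vectors depend on $x$, because Alice acts by $x$-controlled unitaries while Bob's maps are fixed). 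Tracing out Alice's side then leaves, for every $x$, a sub-normalized mixture of at most $2^{m_A+m_B}$ pure states drawn from a single $x$-independent family, which is dominated by $2^{m_A+m_B}\,\omega_{Y}$ for one fixed $\omega_Y$; this gives the $m_A+m_B$ loss in one shot. If you wish to salvage your route, the ``credit'' you are looking for is essentially the logarithm of the Schmidt rank across the cut together with the $x$-independence of Bob's Schmidt basis --- at which point you have reconstructed Yao's lemma and the round-by-round accounting becomes unnecessary.
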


It is interesting to discuss the implication of Theorem~\ref{thm:LCR-informal} to Holevo's problem of conveying classical messages by transmitting quantum states. In the interactive setting, Cleve \etal~\cite{CDNT99} and Nayak and Salzman~\cite{NS06} showed that for Alice to reliably communicate $n$ bits of classical information to Bob, roughly $n$ qubits of total communication and $n/2$ qubits of one-way communication from Alice to Bob are necessary. The same conclusion follows immediately from Theorem~\ref{thm:LCR-informal}.

In fact, in the case without initial shared cq-states,  the general form of the result in~\cite{NS06} (Theorem 1.4) agrees  to the above interactive leakage chain rule.
Thus our interactive leakage chain rule can be viewed as a generalization of~\cite{NS06} to allow initial correlation between $X$ and $Y$. We remark that our proof is not a generalization of the proof in~\cite{NS06}, although we both used Yao's lemma~\cite{Yao93}.
Conceptually, the use of interactive leakage chain rule makes the proof simple.

This manuscript is organized as follows.
In Sec.~\ref{sec:prelim} we give some basics about quantum information.
Then we discuss the leakage chain rule for quantum min-entropy and its application to the problem of communicating classical information in Sec.~\ref{sec:leakage-chain-rule}.


\section{Preliminaries} \label{sec:prelim}

We give notation and briefly introduce basics of quantum mechanics here.
The Hilbert space of a quantum system $A$  is denoted by the corresponding calligraphic letter  $\cA$
and its dimension is denoted by   $d_A$.
Let $L(\cA)$ be the space of linear operators on $\cA$.
A  quantum state of system $A$ is described by a \emph{density operator} $\rho_A\in L(\cA)$ that is positive semidefinite  and with unit trace $(\tr(\rho_A)=1)$. 
Let $D(\cA)= \{ \rho_A\in L(\cA): \rho_A\geq 0, \tr(\rho_A)=1\}$ be the set of density operators on  $\cA$.
When $\rho_A\in D(\cA)$ is of rank one, it is called a \emph{pure} quantum state and we can write $\rho=\ket{\psi}_A\bra{\psi}$ for some unit vector $\ket{\psi}_A\in \cA$,
where $\bra{\psi}=\ket{\psi}^{\dag}$ is the conjugate transpose of $\ket{\psi}$. If  $\rho_A$ is not pure, it is called a \emph{mixed} state and can be expressed as a convex combination of pure quantum states.

The evolution of a quantum state $\rho\in D(\cA)$ is described by a completely positive and trace-preserving (CPTP) map $\Psi: D(\cA)\rightarrow D(\cA')$  
such that $\Psi(\rho) =\sum_{k} E_k \rho E_k^\dag$,
where $\sum_k  E_k^\dag E_k =\I_{A}$.
In particular, if the evolution is a unitary $U$, we have the evolved state $\Psi(\rho)=U\rho U^{\dag}$.

The Hilbert space of a joint quantum system $AB$ is the tensor product of the corresponding Hilbert spaces $\cA\otimes \cB$.
Let $\I_A$ denote the identity on system $A$.
For $\rho_{AB}\in D(\cA\ot \cB)$, we will use  $\rho_A=\tr_B(\rho_{AB})$
to denote its reduced density operator in system $A$,
where
\[
\tr_B(\rho_{AB})= \sum_{i} \I_A\ot\bra{i}_B \rho_{AB} \I_A\ot \ket{i}_B
\]
for an orthonormal basis $\{\ket{i}_B\}$ for $\cB$.
A \emph{separable} state $\rho_{AB}$ has a density operator of the form
\[
\rho_{AB}=\sum_{x} p_x \rho_A^x\otimes \rho_B^x,
\]
where $\rho_A^x \in D(\cA)$ and $\rho_B^x \in D(\cB)$. In particular,
a classical-quantum (cq) state $\rho_{AB}$ has a density operator of the form
\[
\rho_{AB}=\sum_{a} p_a \ket{a}_A\bra{a}\otimes \rho_B^a,
\]
where $\{\ket{a}_A\}$ is an orthonormal basis for $\cA$ and $\rho_B^a \in D(\cB)$.
We define the following specific quantum operations on cq-states that preserve the classical system.
\bd A  quantum operation $\Gamma$ on a classical-quantum system $AB$ is said to be controlled by the classical system $A$ if, for a cq state $\rho_{AB}= \sum_{a} p_a  \ket{a}_A\bra{a}\otimes \rho_B^a$, $$\Gamma(\rho_{AB})= \sum_{a} p_a  \ket{a}_A\bra{a}\otimes \Gamma^a(\rho_B^a),$$
where $\Gamma^a$ are CPTP maps.
In this case, $\Gamma$ is called a \emph{classically-controlled quantum operation}.
In particular, if $\Gamma^a$ are unitaries, $\Gamma$ is called a  \emph{classically-controlled unitary}.
\ed
\noindent Note that the reduced state for classical system $A$ of a cq-state $\rho_{AB}$  remains the same after a
classically-controlled quantum operation $\Gamma$. That is, $\tr_B \rho_{AB}= \tr_B{\Gamma(\rho_{AB})}$.

\bl [Schmidt decomposition]
For a pure state $\ket{\psi}_{AB}\in\cA\ot\cB$, there exist orthonormal states $\{\ket{i}_A\}\in\cA$ and  $\{\ket{i}_B\}\in\cB$
such that \[
\ket{\psi}_{AB}=\sum_{i=1}^s \lambda_i \ket{i}_A\ot \ket{i}_B,
\]
where $\lambda_i\geq 0$, $s\leq \min\{d_A, d_B\}$, and the smallest such $s$ is called the \emph{Schmidt rank} of $\ket{\psi}_{AB}$.
\el
\bl [Purification]
Suppose $\rho_A\in D(\cA)$ of finite dimension $d_A$. Then there exists $\cB$ of dimension $d_B\geq d_A$ and $\ket{\psi}_{AB}\in \cA\otimes \cB$ such that
\[
\tr_B \ket{\psi}_{AB}\bra{\psi} = \rho_A.
\]
\el



The trace distance between two quantum states $\rho$ and $\sigma$ is
$$||{\rho}-{\sigma}||_{\mathrm{tr}},$$
where $||X||_{\mathrm{tr}}=\frac{1}{2}\tr{\sqrt{X^{\dag}X}}$ is the trace norm of $X$.
The fidelity between $\rho$ and $\sigma$ is
$$
F(\rho,\sigma)=\tr \sqrt{\rho^{1/2}{\sigma}\rho^{1/2}}.
$$
\bt[Uhlmann's theorem~\cite{Uhl76}]
$$F(\rho_{A},\sigma_A)= \max_{\ket{\phi'}} |\braket{\psi}{\phi'}|,$$ where the maximization is over all purification of $\sigma_A$.
\et
\noindent Below is a variant of Uhlmann's theorem.
\bc \label{cor:Uhlmann}
Suppose $\rho_A$ is a reduced density operator of $\rho_{AB}$.
Suppose $\rho_A$ and $\sigma_A$ have fidelity $F(\rho_A,\sigma_A)\geq 1- \epsilon$. Then
there exists $\sigma_{AB}$ with $\tr_B(\sigma_{AB})=\sigma_A$ such that $F(\rho_{AB},\sigma_{AB})\geq 1- \epsilon$.
\ec
\begin{proof}
Let $\ket{\psi}_{ABR}$ be a purification of $\rho_{AB}$, which is immediately  a purification of $\rho_A$.
Suppose $\ket{\phi}$ is a purification of $\sigma_A$ such that $|\braket{\psi}{\phi}|\geq 1-\epsilon$. Let $\sigma_{AB}= \tr_R(\ket{\phi}\bra{\phi})$.
Then $F(\rho_{AB},\sigma_{AB})\geq |\braket{\psi}{\phi'}| \geq 1- \epsilon$.
\end{proof}
A relation between the fidelity and the trace distance of two quantum states $\sigma$ and $\rho$ was proved by Fuchs and van~de Graaf~\cite{FvdG99} that
\begin{align}
1-F(\rho,\sigma)\leq   || \rho-\sigma||_{\mathrm{tr}}\leq \sqrt{1- F^2(\rho,\sigma)}. \label{eq:FT}
\end{align}
The purified distance is defined as
\begin{align}
P(\rho,\sigma)=&\sqrt{1- F^2(\rho,\sigma)}.
\end{align}

For a {one-sided two-party protocol} (that is, only one party will have the output), where Alice has no (or little) information about Bob's input, Lo showed that it is possible for Bob to cheat by changing his input at a later time~\cite{Lo97}.
The basic idea can be formulated as the following lemma, which is proved by a standard argument using Uhlmann theorem and the Fuchs and van~de Graaf~inequality~\cite{FvdG99} (for a proof, see, e.g.,~\cite{BB15}).
\bl\label{lemma:lo_attack}
Suppose $\rho_A$, $\sigma_A\in\cA$ are two quantum states with  purifications $\ket{\phi}_{AB}$, $\ket{\psi}_{AB}\in \cA\ot\cB$, respectively, and $||\rho_A-\sigma_A||_{\tr}\leq \epsilon$. Then there exists a unitary $U_B\in L(\cB)$ such that
$$||\ket{\phi}_{AB}-\I_A\ot U_B\ket{\psi}_{AB}||_{\tr}\leq \sqrt{\epsilon(2-\epsilon)}.$$
\el

%
%
%
%

\subsection{Protocol Definition}

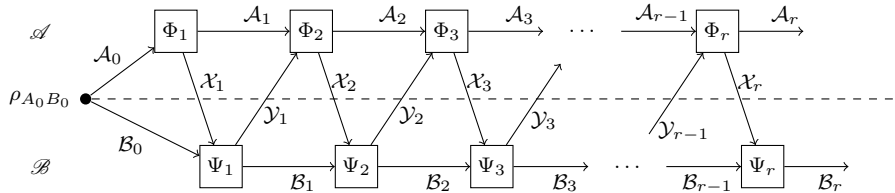
\begin{figure} [h]
\centerline{
\vspace{-0.3cm}
    \begin{tikzpicture}[scale=0.6][thick]
   \fontsize{8pt}{1} 
    \tikzstyle{variablenode} = [draw,fill=white, shape=rectangle,minimum size=2.em]
    \node [fill=black,inner sep=1.5pt,shape=circle,label=left:$\rho_{A_0B_0}$] (n0) at (-2,1.5) {} ;
        \node [fill=white,inner sep=1.5pt] (n1) at (16,1.5) {} ;
    \node [fill=white,inner sep=1.5pt] (bn4) at (-3,3) {$\mathrsfs{A}$} ;
    \node [fill=white,inner sep=1.5pt] (bn4) at (-3,0) {$\mathrsfs{B}$} ;
    \node[variablenode] [fill=white,inner sep=1.5pt ](an1)  at(0,3) {$\Phi_1$};
    \node[variablenode] [fill=white,inner sep=1.5pt] (an2) at (3,3) {$\Phi_2$} ;
    \node[variablenode] [fill=white,inner sep=1.5pt](an3)  at(6,3) {$\Phi_3$};
    \node [fill=white,inner sep=1.5pt,shape=circle] (an4) at (9,3) {\quad$\dots \quad$} ;
        \node[variablenode] [fill=white,inner sep=1.5pt] (an5) at (12,3) {$\Phi_r$} ;
            \node [fill=white,inner sep=1.5pt] (an6) at (14,3) {$$} ;
    \node[variablenode] [fill=white,inner sep=1.5pt] (bn1) at (1,0) {$\Psi_1$} ;
    \node[variablenode] [fill=white,inner sep=1.5pt] (bn2) at (4,0) {$\Psi_2$} ;
    \node[variablenode] [fill=white,inner sep=1.5pt] (bn3) at (7,0) {$\Psi_3$} ;
    \node [fill=white,inner sep=1.5pt,shape=circle] (bn4) at (10,0) {\quad$\dots \quad$} ;
    \node[variablenode] [fill=white,inner sep=1.5pt] (bn5) at (13,0) {$\Psi_r$} ;
    \node [fill=white,inner sep=1.5pt] (bn6) at (15,0) {} ;
      \draw[dashed]  (n0) -- (n1);
    \draw[->] (n0)--(an1) node [above,midway] {$\cA_0$\mbox{  \quad}};
    \draw[->] (n0)--(bn1) node [below,midway] {$\cB_0$\mbox{  \quad}};
    \draw[->] (an1) -- (bn1) node [above,midway] {\mbox{  \quad}$\cX_1$};
    \draw[->]    (an2) -- (bn2)  node [above,midway] {\mbox{\quad}$ \cX_2$};
    \draw[->]    (an3) -- (bn3)  node [above,midway] {\mbox{\quad}$ \cX_3$};
    \draw[->]    (an5) -- (bn5)  node [above,midway] {\mbox{\quad}$ \cX_r$};
    \draw[->] (an1) -- (an2) node [above,midway] {\mbox{  \quad}$\cA_1$};
    \draw[->] (an2) -- (an3) node [above,midway] {\mbox{  \quad}$\cA_2$};
    \draw[->] (an3) -- (an4) node [above,midway] {\mbox{  \quad}$\cA_3$};
    \draw[->] (an4) -- (an5) node [above,midway] {\mbox{  \quad}$\cA_{r-1}\quad $};
    \draw[->] (an5) -- (an6) node [above,midway] {\mbox{  \quad}$\cA_{r}$};
    \draw[->] (bn1) -- (an2) node [below,midway] {\quad$ \cY_1$} ;
    \draw[->] (bn2) --(an3) node [below,midway] {\mbox{\quad}$ \cY_2$} ;
    \draw[->] (bn3) --(an4) node [below,midway] {\mbox{\quad}$ \cY_3$} ;
    \draw[->] (bn4) --(an5) node [below,near start] {\mbox{\quad}$\quad \cY_{r-1}$} ;
    \draw[->] (bn1) -- (bn2) node [below,midway] {\mbox{  \quad}$\cB_1$};
    \draw[->] (bn2) -- (bn3) node [below,midway] {\mbox{  \quad}$\cB_2$};
    \draw[->] (bn3) -- (bn4) node [below,midway] {\mbox{  \quad}$\cB_3$};
    \draw[->] (bn4) -- (bn5) node [below,midway] {\mbox{  \quad}$\cB_{r-1}\quad $ };
    \draw[->] (bn5) -- (bn6) node [below,midway] {\mbox{  \quad}$\cB_{r}$};
   \end{tikzpicture}
 }
 \caption{An interactive two-party  quantum protocol.} \label{fig:two-party}
\end{figure}
 We basically follow  the definition  of two-party quantum protocol~\cite{GW07,DNS10}.
Consider a quantum protocol between two parties $\rA$ and $\rB$, where the party $\rA$ sends the first and the last messages without loss of generality.
Such a two-party quantum protocol is defined as follows.
 \bd (Two-party quantum protocol) \label{def:two-party}
 An $(r,m_A,m_B)$ protocol $\Pi=( \mathrsfs{A},\mathrsfs{B})$ is a two-party protocol with  $r$ rounds of interaction defined as follows:
 \begin{enumerate}
 \item input spaces $\cA_0$ and $\cB_0$ for parties $\rA$ and $\rB$, respectively;
 \item memory spaces $\cA_1, \dots, \cA_r$ for $\rA$ and  $\cB_1, \dots, \cB_r$ for $\rB$;
 \item communication spaces  $\cX_1, \dots, \cX_r$,  $\cY_1, \dots, \cY_{r-1}$;
 \item a series of quantum operations $\Phi_1,\dots,\Phi_r$ for $\rA$
 and a series of quantum operations $\Psi_1,\dots,\Psi_r$ for $\rB$,
 where
 \begin{align*}
 \Phi_1:& L(\cA_0)\rightarrow L(\cA_1\otimes \cX_1);\\
 \Phi_i:& L(\cA_{i-1}\otimes \cY_{i-1})\rightarrow L(\cA_{i}\otimes \cX_{i}),\  i=2,\dots,r;\\
 \Psi_j:& L(\cB_{j-1}\otimes \cX_{j})\rightarrow L(\cB_{j}\otimes \cY_{j}),\  j=1,\dots,r-1;\\
 \Psi_r:& L(\cB_{r-1}\otimes \cX_{r})\rightarrow L(\cB_r).
 \end{align*}
\end{enumerate}
The \emph{one-way communication complexities} (in terms of qubits) sent from Alice to Bob and from Bob to Alice are
$m_A=\sum_{i=1}^r \log d_{X_i}$ and $m_B=\sum_{j=1}^{r-1}\log d_{Y_j}$, respectively.
The  \emph{(total) communication complexity} of this protocol is $m_A+m_B$.
 \ed
For input state $\rho\in D(\cA_0\otimes \cB_0\otimes \cR)$, where $R$ is a reference system of dimension $d_R= d_{A_0}d_{B_0}$,
let
\begin{align*}
[\rA_1^i\circledast \rB_1^{i-1}](\rho)=&  \left(\Phi_{i}\ot\I_{B_{i-1},R}\right)\left(\Psi_{i-1}\ot\I_{A_{i-1},R}\right) \cdots \left(\Psi_1\ot \I_{A_1,R}\right)\left(\Phi_1\ot \I_{B_0,R}\right) (\rho),\\
[\rA_1^i\circledast \rB_1^i](\rho)=&  \left(\Psi_{i}\ot\I_{A_{i},R}\right) \left(\Phi_{i}\ot\I_{B_{i-1},R}\right) \cdots \left(\Psi_1\ot \I_{A_1,R}\right)\left(\Phi_1\ot \I_{B_0,R}\right) (\rho),
\end{align*}
and let $\Pi(\rho)=[\rA_1^r\circledast \rB_1^r](\rho)$ denote the final state of protocol $\Pi=(\rA,\rB)$ on input $\rho$.

Figure~\ref{fig:two-party} illustrates an interactive two-party quantum protocol.
Note that the input state $\rho_{A_0B_0}\in D(\cA_0\otimes \cB_0)$  may consist of a classical string, tensor products of pure quantum states, or an entangled quantum state, depending on the context of the underlying protocol.
For example, a part of it can be EPR pairs shared between Alice and Bob.
Also the reference system $R$ is not shown in Fig.~\ref{fig:two-party}.

\begin{remark}
In the following discussion we will consider a specific two-party protocol, where the input of $\rA$ is a classical system $A_0$ that is preserved throughout the protocol and its quantum operations
 \begin{align*}
 \Phi_1:& L(\cA_0)\rightarrow L(\cA_0\otimes \cA_1\otimes \cX_1),\\
 \Phi_i:& L(\cA_0\otimes \cA_{i-1}\otimes \cY_{i-1})\rightarrow L(\cA_0\otimes \cA_{i}\otimes \cX_{i}),\  i=2,\dots,r
 \end{align*}
 are classically-controlled quantum operations controlled by $A_0$.
\end{remark}

\section{Leakage Chain Rule for quantum min-entropy}\label{sec:leakage-chain-rule}
We first review the notion of quantum (smooth) min-entropy~\cite{RW04}.
\bd \label{def:minentropy}
Consider a bipartite quantum state $\rho_{AB}\in D\left(\cA\otimes \cB\right)$.
The   min-entropy of $A$ conditioned on $B$ is defined as
\begin{align}
    H_{\min}(A|B)_{\rho} =  -\inf_{\sigma_{B}}
    \left\{
      \inf \left\{\lambda\in \mathbb{R}: \rho_{AB}\leq 2^\lambda \I_{A}\otimes \sigma_B\right\} \label{def:Hmin}
    \right\}.
 \end{align}
 \ed
When $\rho_{AB}$ is a cq-state, the quantum min-entropy has an operational meaning in terms of guessing probability~\cite{KRS09}. Specifically, if $H_{\min}(A|B)_{\rho} = k$, then the optimal probability of predicting the value of $A$ given $\rho_B$  is exactly $2^{-k}$.

 The smooth min-entropy of $A$ conditioned on $B$ is defined as
\begin{align*}
    H_{\min}^{\epsilon}(A|B)_{\rho} =  \sup_{ \rho':  P(\rho',\rho)<\epsilon} H_{\min}(A|B)_\rho.
 \end{align*}
{For simplicity we focus on the discussion of min-entropy and our results can be generalized to smooth min-entropy without much effort.}

In cryptography, we would like to see how much (conditional)  min-entropy is left in an information source  when the adversary gains additional information leakage.
This is characterized by the leakage chain rule  for min-entropy. In the quantum case, the situation is different due to the phenomenon of quantum entanglement.
When two parties share a separable quantum state $\rho$, this is like the classical case and we have the following leakage chain rule for conditional quantum min-entropy~\cite{DD10}:
\begin{lemma}\cite[Lemma 7]{DD10}
  Let $\rho = \rho_{AXB}=\sum_{k} p_{k}\rho_{AX}^k  \ot \rho_B^k$ be a separable state in $D(\cA\ot\cX\ot\cB)$.
  Then
  \[\Hmin(A|XB)_{\rho} \geq \Hmin(A|B)_\rho - \log d_X. \]
\end{lemma}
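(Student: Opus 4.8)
The plan is to reduce the inequality to a single operator bound and then read off the claim from Definition~\ref{def:minentropy}. Put $k_0 = \Hmin(A|B)_\rho$; by Definition~\ref{def:minentropy} there is a density operator $\sigma_B\in D(\cB)$ with $\rho_{AB}\le 2^{-k_0}\,\I_A\otimes\sigma_B$ (the infimum is attained in finite dimension). I claim the density operator $\sigma_{XB} := \tfrac{1}{d_X}\,\I_X\otimes\sigma_B$ --- legitimate since $\tr(\sigma_{XB})=1$ --- witnesses
\[
\rho_{AXB}\ \le\ 2^{-(k_0-\log d_X)}\,\I_A\otimes\sigma_{XB},
\]
which by Definition~\ref{def:minentropy} is exactly $\Hmin(A|XB)_\rho\ge k_0-\log d_X=\Hmin(A|B)_\rho-\log d_X$.

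The one nontrivial ingredient is the operator inequality
\[
\rho_{AXB}\ \le\ \I_X\otimes\rho_{AB}.
\]
Granting it, tensoring $\rho_{AB}\le 2^{-k_0}\I_A\otimes\sigma_B$ with $\I_X$ and chaining gives $\rho_{AXB}\le 2^{-k_0}\,\I_X\otimes\I_A\otimes\sigma_B=2^{-k_0}d_X\,\I_A\otimes\sigma_{XB}=2^{-(k_0-\log d_X)}\,\I_A\otimes\sigma_{XB}$, the displayed bound. To prove $\rho_{AXB}\le\I_X\otimes\rho_{AB}$ I use the separability of $\rho$ across $\{AX\}$ and $\{B\}$ together with the fact that the source $A$ is classical (as in the cryptographic setting of the lemma). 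Writing $\rho_{AXB}=\sum_a\ketbra{a}{a}_A\otimes\omega_{XB}^a$ with $\omega_{XB}^a=\bra{a}_A\rho_{AXB}\ket{a}_A=\sum_k p_k\,(\bra{a}_A\rho_{AX}^k\ket{a}_A)\otimes\rho_B^k$, each $\omega_{XB}^a$ is a nonnegative combination of product operators, i.e.\ is separable across $X$ and $B$. For any such separable positive operator $\tau_{XB}=\sum_j c_j\,\xi_X^j\otimes\xi_B^j$ (with $c_j\ge0$ and $\xi_X^j,\xi_B^j\ge0$) the bound $\xi_X^j\le(\tr(\xi_X^j))\,\I_X$ gives $\tau_{XB}\le\I_X\otimes\bigl(\sum_j c_j\,(\tr(\xi_X^j))\,\xi_B^j\bigr)=\I_X\otimes\tau_B$ with $\tau_B=\tr_X(\tau_{XB})$; applying this to each $\omega_{XB}^a$ and summing over $a$ yields $\rho_{AXB}\le\I_X\otimes\rho_{AB}$, since $\tr_X(\rho_{AXB})=\rho_{AB}$.

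I expect $\rho_{AXB}\le\I_X\otimes\rho_{AB}$ to be the crux: this is precisely where both hypotheses enter --- separability so that each branch $\omega_{XB}^a$ factorizes across $X$--$B$, and classicality of $A$ so that the state decomposes into such branches. The inequality is genuinely false in general; for an arbitrary state one only gets $\rho_{AXB}\le d_X\,\I_X\otimes\rho_{AB}$, which would cost $2\log d_X$, matching the entangled (superdense-coding) case in~\eqref{eqn:LCR}. The remaining steps are routine manipulations with Definition~\ref{def:minentropy}, the only real check being the normalization of $\sigma_{XB}$. One can also phrase the argument operationally via the guessing-probability characterization~\cite{KRS09}: for any POVM $\{M_a\}$ on $\cX\otimes\cB$, the same branchwise inequality gives $\sum_a\tr(M_a\omega_{XB}^a)\le\sum_a\tr(M_a(\I_X\otimes\omega_B^a))=\sum_a\tr((\tr_X M_a)\,\omega_B^a)\le d_X\,\Pguess(A|B)_\rho$ because $\{\tfrac1{d_X}\tr_X M_a\}_a$ is a POVM on $\cB$, so $\Pguess(A|XB)_\rho\le d_X\,\Pguess(A|B)_\rho$.
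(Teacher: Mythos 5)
The paper itself does not prove this lemma---it is quoted verbatim from~\cite{DD10}---so there is no internal proof to compare against; I will assess your argument on its own terms and against the source it cites. Your overall strategy is the right one and is essentially the standard proof: take the optimal $\sigma_B$ from Definition~\ref{def:minentropy}, exhibit $\sigma_{XB}=\tfrac{1}{d_X}\I_X\ot\sigma_B$ as a feasible point for the optimization defining $\Hmin(A|XB)$, and reduce everything to a single operator inequality of the form $\rho_{AXB}\le\I_X\ot\rho_{AB}$. The bookkeeping with $2^{-(k_0-\log d_X)}$ is correct, the elementary bound $\xi_X\le\tr(\xi_X)\,\I_X$ for positive $\xi_X$ is correct, and your operational variant via guessing probabilities is a valid rephrasing of the same inequality.

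The problem is the hypothesis you import: you assume $A$ is classical, which is not part of the lemma as stated. This is not cosmetic, because with the separable decomposition written in the lemma---$\rho=\sum_k p_k\rho_{AX}^k\ot\rho_B^k$, i.e.\ the cut between $AX$ and $B$---the statement is false without classicality of $A$. Take a single term with $\rho_{AX}^1=\ketbra{\Phi^+}{\Phi^+}$ maximally entangled on $d\times d$ and $\rho_B^1$ arbitrary: then $\Hmin(A|B)_\rho=\log d$ while $\Hmin(A|XB)_\rho\le\Hmin(A|X)_\rho=-\log d$, violating the claimed bound. So your instinct that $\omega_{XB}^a\le\I_X\ot\omega_B^a$ is ``the crux'' is right, but that crux cannot be reached from the stated hypotheses alone. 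The hypothesis actually used in~\cite{DD10}, and the one consistent with Eq.~(\ref{eqn:LCR}) in the introduction, places the \emph{leakage} register alone on one side of the separable cut, $\rho=\sum_k p_k\rho_{AB}^k\ot\rho_X^k$; under that assumption the key inequality is the one-liner $\rho_X^k\le\I_X$, hence $\rho_{AXB}\le\rho_{AB}\ot\I_X$, and no classicality of $A$ is needed. The lemma as transcribed here appears to have the tensor factors swapped. In short: you have given a correct proof of a \emph{different} sufficient condition (classical $A$ plus branchwise $X$--$B$ separability), which does cover every use of the lemma in this paper since $A_0$ is always classical there; but as a proof of the lemma as written, the added classicality assumption is a genuine gap---and, more importantly, a signal that the statement itself needs to be corrected.
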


Winkler~\etal~\cite{WTHR11} proved  the leakage chain rule for quantum (smooth) min-entropy for general quantum states with entanglement. 
\bl{\cite[Lemma~13]{WTHR11}} \label{lemma:LCR2}
  Let $\rho = \rho_{AXB}$ be a quantum state in $D(\cA\ot\cX\ot\cB)$.
  Then
  \[\Hmin(A|XB)_{\rho} \geq \Hmin(A|B)_\rho - 2\log d.\]
where $d= \min\{d_A d_B, d_X\}$.
\el

Lemma~\ref{lemma:LCR2} only characterizes the entropy loss regarding the one-way communication complexity.
We would like to find one that characterizes the two-way communication complexity.
First we prove a variant of Yao's lemma~\cite{Yao93} (see also \cite[Lemma 4]{Kre95}).
For our purpose, the formulation is not symmetric in $\rA$ and $\rB$.

\bl \label{lemma:Yao}
Suppose  $\Pi=(\rA,\rB)$ is an $(r,m_A,m_B)$ quantum protocol with initial state $(\ket{x}\ket{0})_{A_0}\otimes \ket{\zeta}_{B_0}$,
where  $x$ is a binary string,
and that the quantum operations $\Phi_i$ for $\rA$ are classical-controlled unitaries controlled by $\ket{x}_{A_0}$, respectively.
Then the final state of the protocol can be written as
\[
 \sum_{i\in\{0,1\}^{m_A+m_B}} \lambda_{i}  \ket{x}_{A_0}\otimes\ket{\xi_{i}}_{A_r}\ot \ket{\zeta_{i}}_{B_r},
\]
where $\lambda_{i}\geq 0$;
 $\ket{\xi_{i}}_{A_r}$ can be determined by  $\Pi$   and $x$; and
 $\ket{\zeta_{i}}_{B_r}$ can be determined by  $\Pi$ and $\ket{\zeta}_{B_0}$.

\el
\begin{proof}
We prove it by induction.
For simplicity, we will ignore the fixed register $\ket{x}_{A_0}$ in the following and remember that $\Phi_i$ are classically-controlled unitaries controlled by $\ket{x}_{A_0}$.
Suppose $\Pi=(\rA,\rB)$ is defined as in Def.~\ref{def:two-party}.
Let    $m_{B}^{(i)} =\sum_{j=1}^i \log d_{Y_i}$ and $m_{A}^{(i)}=\sum_{j=1}^i \log d_{X_i}$.
The statement is true for the initial state $\rho=\ket{0}_{A_0}\otimes \ket{\zeta}_{B_0}$, which is of rank one.
Suppose the statement holds  after $k$ rounds. That is,
\[
[\rA_1^k\circledast \rB_1^k](\rho)= \sum_{i\in\{0,1\}^{m_{A}^{(k)}+m_{B}^{(k)}}} \lambda_{i}^{(k)}\ket{\xi_{i}^{(k)}}_{A_k Y_k}\ot \ket{\zeta_{i}^{(k)}}_{B_k},
\]
where we use the superscript $(k)$ to indicate the states $\ket{\xi_i}, \ket{\zeta_i}$ or coefficients $\lambda_i$ after $k$ rounds.

Thus
\begin{align*}\displaystyle
&(\Phi_{k+1}\otimes \id_{B_k}) \sum_{i\in\{0,1\}^{m_{A}^{(k)}+m_{B}^{(k)}}} \lambda_{i}^{(k)}\ket{\xi_{i}^{(k)}}_{A_k Y_k}\ot \ket{\zeta_{i}^{(k)}}_{B_k}\\
\stackrel{(a)}{=}&\sum_{i\in\{0,1\}^{m_{A}^{(k)}+m_{B}^{(k)}}} \lambda_{i}^{(k)} \sum_{a\in\{0,1\}^{\log d_{X_{k+1}}}} \alpha_{i,a}  \ket{\xi_{i}^{(k+1)}, a}_{A_{k+1}}\ot \ket{a}_{X_{k+1}}  \ot \ket{\zeta_{i}^{(k)}}_{B_k}\\
\stackrel{\Psi_{k+1}\otimes \id_{A_{k+1}}}{\longrightarrow}& \sum_{i\in\{0,1\}^{m_{A}^{(k)}+m_{B}^{(k)}}} \sum_{a\in\{0,1\}^{\log d_{X_{k+1}}}} \lambda_{i}^{(k)}\alpha_{i,a}  \ket{\xi_{i}^{(k+1)}, a}_{A_{k+1}}\ot  \Psi_{X_{k+1},B_{k}}\left(\ket{a}_{X_{k+1}}  \ket{\zeta_{i}^{(k)}}_{B_k}\right)\\
\stackrel{(b)}{=}&\sum_{i\in\{0,1\}^{m_{A}^{(k)}+m_{B}^{(k)}}} \sum_{a\in\{0,1\}^{\log d_{X_{k+1}}}} \lambda_{i}^{(k)}\alpha_{i,a}  \ket{\xi_{i}^{(k+1)}, a}_{A_{k+1}}\ot \sum_{b\in\{0,1\}^{\log d_{Y_{k+1}}}} \beta_{i,a,b} \ket{b}_{Y_{k+1}}  \ot \ket{\zeta_{i}^{(k)},a,b}_{B_{k+1}}\\
\stackrel{(c)}{=}&\sum_{i\in\{0,1\}^{m_{A}^{(k+1)}+m_{B}^{(k+1)}}} \lambda_{i}^{(k+1)}\ket{\xi_{i}^{(k+1)}}_{A_{k+1}Y_{k+1}}\ot \ket{\zeta_{i}^{(k+1)}}_{B_{k+1}},
\end{align*}
where $(a)$ and $(b)$ are by Schmidt decomposition on $ \Phi_{k+1} \ket{\xi_{i}^{(k)}}_{A_kY_k}$ and $\Psi_{k+1}\left(\ket{a}_{X_{k+1}}  \ket{\zeta_{i}^{(k)}}_{B_k}\right)$, respectively,
with $\alpha_{i,a},\beta_{i,a,b}>0$;
in (c) the indexes   $i$,  $a$, and $b$ are merged and $ \lambda_{i:a:b}^{(k+1)}= \lambda_{i}^{(k)}\alpha_{i,a}\beta_{i,a,b}$.
(We use $a:b$ to denote the concatenation of two strings $a$ and $b$.)

Since $\sum_{b\in\{0,1\}^{d_{Y_{k+1}}}} \beta_{i,a,b} \ket{b}_{Y_{k+1}}  \ot \ket{\zeta_{i:a:b}^{(k+1)}}_{B_{k+1}}=\Psi_{{k+1}}\left(\ket{a}_{X_{k+1}}  \ket{\zeta_{i}^{(k)}}_{B_k}\right)$ and $\ket{\zeta_{i}^{(k)}}_{B_k}$ can be determined by $\Pi$ and $\ket{\zeta}_{B_0}$ by assumption,
 $\ket{\zeta_{i:a:b}^{(k+1)}}_{B_k}$ can also be determined by $\Pi$ and $\ket{\zeta}_{B_0}$.
 Similarly, $\ket{\xi_i^{(k+1)}}_{A_k}$ can be generated by $\Pi$ and $x$.

\end{proof}

Next we consider a special type of interactive two-party protocol on an input cq-state $\rho = \rho_{AB}$, where the system $A$ is classical and will be preserved throughout the protocol. The interactive leakage chain rule bounds how much the min-entropy $H_{\min}(A|B)_{\rho}$ can be decreased by an ``interactive leakage'' generated by applying a two-party protocol $\Pi =  \{\rA,\rB\}$ to $\rho$, where $A$ is treated as a classical input to $\rA$ and $B$ is given to $\rB$ as part of its initial state. 


\begin{theorem}[Interactive leakage chain rule for quantum min-entropy] \label{lemma:interactiveLCR}
Suppose $\rho_{A_0B_0} $ is a cq-state,
  where $A_0$ is classical. Let $\Pi =  \{\rA,\rB\}$ be an $(r,m_A,m_B)$ two-party protocol with classically-controlled quantum operations $\Phi_i$ controlled by $A_0$.
 Let  $\sigma_{A_0A_rB_r}= \left[\rA\circledast\rB\right](\rho_{A_0B_0})$ be the final state of the protocol. 
 Then
	\begin{align}
	&H_{\min}(A_0|B_r)_{\sigma}\geq H_{\min}(A_0|B_0)_\rho -  \min\{ m_A+m_B, 2m_{A}\}, \label{eq:LCR}
	\end{align}
 We say that $\sigma_{B_r}$ is an \emph{interactive leakage} of $A_0$ generated by $\Pi$.

\end{theorem}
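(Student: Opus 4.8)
The plan is to prove two bounds separately and take the better one: $H_{\min}(A_0|B_r)_\sigma\ge H_{\min}(A_0|B_0)_\rho-2m_A$ and $H_{\min}(A_0|B_r)_\sigma\ge H_{\min}(A_0|B_0)_\rho-(m_A+m_B)$. Throughout one may assume, at no cost in communication, that all operations have been dilated to (classically controlled) isometries: dilating Alice's side only enlarges $A_r$, and dilating Bob's side only enlarges $B_r$, which can only decrease $H_{\min}(A_0|B_r)_\sigma$, so it suffices to prove the bounds for the dilated protocol.

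\emph{The $2m_A$ bound.} I would track the quantity $H_{\min}(A_0\mid\text{Bob's current register})$ along the execution of $\Pi$. Each $\Phi_i$ is controlled by $A_0$, hence touches neither $A_0$ nor Bob's register and leaves this quantity unchanged. When Alice sends the register $X_i$ (of $\log d_{X_i}$ qubits) to Bob, the non-interactive chain rule (Lemma~\ref{lemma:LCR2}) gives $H_{\min}(A_0|B_{i-1}X_i)\ge H_{\min}(A_0|B_{i-1})-2\log d_{X_i}$. The operation $\Psi_i$ followed by discarding $Y_i$ is a CPTP map on Bob's side, and conditional min-entropy never decreases under such a map (immediate from Definition~\ref{def:minentropy}: $\rho_{AB}\le 2^{\lambda}\I_A\otimes\sigma_B$ implies $(\I_A\otimes\mathcal E)\rho_{AB}\le 2^{\lambda}\I_A\otimes\mathcal E(\sigma_B)$). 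Telescoping over the $r$ rounds gives a total loss of at most $\sum_{i=1}^{r}2\log d_{X_i}=2m_A$. This part uses neither that $\rho$ is cq nor any structure of $\rho$, only that $A_0$ is preserved.

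\emph{The $(m_A+m_B)$ bound.} Here I would use the operational identity $2^{-H_{\min}(A_0|\cdot)}=\Pguess(A_0|\cdot)$ for cq states together with Yao's lemma (Lemma~\ref{lemma:Yao}). Write $\rho_{A_0B_0}=\sum_x p_x\ketbra{x}{x}_{A_0}\otimes\rho_{B_0}^x$, purify each $\rho_{B_0}^x$ on $\mathcal B_0\otimes\mathcal R$ as $\ket{\zeta^x}$, and apply Lemma~\ref{lemma:Yao} in the $x$-branch: the final pure state is $\sum_i\lambda_i^x\ket{x}_{A_0}\ket{\xi_i^x}_{A_r}\ket{\zeta_i^x}_{B_rR}$, where $i$ runs over the $2^{m_A+m_B}$ transcripts, both $A_r$ and $B_r$ carry a record of $i$ (so the $\ket{\xi_i^x}$ and the $\ket{\zeta_i^x}$ are orthonormal across $i$ and $\ket{\zeta_i^x}_{B_rR}=\ket{i}_T\ket{\eta_i^x}_{B_r'R}$ with $T\subseteq B_r$). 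Hence $\sigma_{A_0B_r}=\sum_{x,i}p_x(\lambda_i^x)^2\ketbra{x}{x}_{A_0}\otimes\ketbra{i}{i}_T\otimes\mu_i^x$ with $\mu_i^x=\tr_R\ketbra{\eta_i^x}{\eta_i^x}$, i.e. it is block diagonal in $T$. The structural fact I would read off from the proof of Lemma~\ref{lemma:Yao} is that conditioning on transcript $i$ sends $\ket{\zeta^x}$ to $(K_i\otimes\I_R)\ket{\zeta^x}$, where $K_i$ is the composition over $j$ of the conditional maps $\bra{b_j}_{Y_j}\Psi_j\ket{a_j}_{X_j}:\mathcal B_{j-1}\to\mathcal B_j$, a contraction depending only on $\Pi$ and $i=(a_1,b_1,\dots)$; tracking the accumulated Schmidt coefficients yields $(\lambda_i^x)^2=(\alpha_i^x)^2\,\tr\mathcal K_i(\rho_{B_0}^x)$ with $\mathcal K_i(\cdot)=K_i(\cdot)K_i^\dagger$ and $(\alpha_i^x)^2\le1$, hence the operator inequality $p_x(\lambda_i^x)^2\mu_i^x\le\mathcal K_i(p_x\rho_{B_0}^x)$ together with $\mathcal K_i^\dagger(\I)=K_i^\dagger K_i\le\I$. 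Because $\sigma_{A_0B_r}$ is block diagonal in $T$, an optimal guessing POVM may be chosen block diagonal, so $\Pguess(A_0|B_r)_\sigma=\sum_i\max_{\{M^{(i)}_x\}}\sum_x\tr(M^{(i)}_x\,p_x(\lambda_i^x)^2\mu_i^x)$; applying the operator inequality and passing to the adjoint $\mathcal K_i^\dagger$ (which turns $\{M^{(i)}_x\}_x$ into a sub-normalized POVM on $\mathcal B_0$) bounds each summand by $\Pguess(A_0|B_0)_\rho$. Summing over the $2^{m_A+m_B}$ transcripts gives $\Pguess(A_0|B_r)_\sigma\le 2^{m_A+m_B}\,\Pguess(A_0|B_0)_\rho$, and taking $-\log$ finishes.

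The hard part is the $(m_A+m_B)$ bound, and inside it the displayed operator inequality extracted from Lemma~\ref{lemma:Yao}: one must verify that the product of the Bob-side Schmidt coefficients collected along the $i$-branch equals $\tr\mathcal K_i(\rho_{B_0}^x)$ and that the leftover Alice-side contribution to $\lambda_i^x$ is a scalar $\le1$. This is also where the statement genuinely generalizes~\cite{NS06}, since one must carry the $x$-dependence of the initial $\rho_{B_0}^x$, and the data-processing step through $\mathcal K_i^\dagger$ is exactly what absorbs it. The remaining steps are routine, and the smooth variant should follow by running the same argument on a near-optimal smoothing state, using Corollary~\ref{cor:Uhlmann} to transport the approximation through the purification.
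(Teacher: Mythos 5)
Your proposal is correct, and while the $2m_A$ half coincides with the paper's argument (round-by-round bookkeeping: Alice's classically-controlled maps and Bob's local maps leave $\Hmin(A_0|\cdot)$ non-decreasing, and each transmission $X_i$ costs at most $2\log d_{X_i}$ by Lemma~\ref{lemma:LCR2}), your $(m_A+m_B)$ half takes a genuinely different route. The paper starts from the operator inequality defining the min-entropy, $\rho_{A_0B_0}\le 2^{\lambda}\,\I_{A_0}/d_{A_0}\otimes\tau_{B_0}$, dilates the protocol to controlled unitaries, pushes this inequality through the protocol, and applies Lemma~\ref{lemma:Yao} to a purification of the \emph{dominating} state $\tau_{B_0}$. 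Since $\tau_{B_0}$ does not depend on $a$, Bob's branch states $\ket{\xi_i}$ are $a$-independent, and the crude bounds $(\lambda_i^a)^2\le 1$ summed over the $2^{m_A+m_B}$ transcripts produce a single dominating state $\omega_{B_r}$, so the bound falls out of Definition~\ref{def:minentropy} with no further work. You instead apply Yao's lemma to the actual conditional states $\rho_{B_0}^x$, which forces you to carry the $x$-dependence of Bob's branches, and you compensate by switching to the guessing-probability characterization of cq min-entropy and running a per-transcript data-processing argument through the conditional Kraus operators $K_i$ and the sub-normalized POVM $\{K_i^\dagger M_x^{(i)}K_i\}_x$. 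I checked the step you flag as the hard part: in the inductive step of Lemma~\ref{lemma:Yao} the Alice-side coefficients $\alpha_{i,a}$ are Schmidt coefficients of a normalized state (each at most $1$) and the Bob-side coefficients $\beta_{i,a,b}$ telescope to $\|K_i\ket{\zeta^x}\|$, so your factorization $(\lambda_i^x)^2=(\alpha_i^x)^2\,\mathrm{tr}\,\mathcal K_i(\rho_{B_0}^x)$ and the ensuing operator inequality are sound, as is the block-diagonality in the transcript register (both parties record the transcript in Yao's construction). What the paper's route buys is that it never needs this factorization nor the guessing-probability picture; what yours buys is a more operational account of where the factor $2^{m_A+m_B}$ comes from (one guessing strategy per transcript) and an argument that visibly absorbs the initial correlation through $\mathcal K_i^\dagger$, at the price of a more delicate extraction of structure from Yao's lemma.
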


%
\begin{proof}


Suppose $\lambda=\log d_{A} - \Hmin(A|B_0)_\rho$.
  By definition~(\ref{def:Hmin}) there exists a density operator $\tau_{{B_0}}$
   such that
  $$\rho_{A_0B_0} \leq 2^{\lambda} \frac{\I_{A_0}}{d_{A_0}}\ot \tau_{B_0}.$$
Suppose $\ket{\xi}_{B_0E}$ is a purification of $\tau_{B_0}$ over $\cB_0\ot\cE$.
  Without loss of generality, we assume that Alice and Bob have auxiliary quantum systems $R_1,R_2$, respectively,  initialized in $\ket{0}_{R_1},\ket{0}_{R_2}$, so that
  the protocol $\Pi$ can be extended to a protocol $\tilde{\Pi}$
  such that the quantum operations of $\tilde{\Pi}$ are unitary operators controlled by $A_0$ for $\rA$
  and unitaries for $\rB$,
  and $\tr_{R_1R_2}\left(\tilde{\Pi}(\rho_{A_0B_0}\otimes \ket{0}_{R_1R_2}\bra{0})\right)= \Pi(\rho_{A_0B_0})$.
Now initially we have
    $$\rho_{{A_0}{B_0}R_1R_2} \leq  \frac{2^{\lambda}}{d_{{A_0}}}  \sum_a \ket{a}_{A_0}\bra{a} \ot \tr_E\left(\ket{\xi}_{{B_0}E}\bra{\xi}\right)\otimes\ket{0}_{R_1R_2}\bra{0}.$$
After the protocol the inequality becomes
    \begin{align*}
    \sigma_{{A_0}A_r{B_r}R_1R_2} \leq&  \frac{2^{\lambda}}{d_{{A_0}}}  \sum_a  \tilde{\Pi} \left(\ket{a}_{A_0}\bra{a} \ot \tr_E\left(\ket{\xi,0}_{{B_0}ER_2}\bra{\xi,0}\right)\otimes \ket{0}_{R_1}\bra{0} \right)\\
    =& \frac{2^{\lambda}}{d_{{A_0}}}  \sum_a   \tr_E \left( \tilde{\Pi}\ot \I_E \left(\ket{a}_{A_0}\bra{a} \ot  \ket{\xi,0}_{{B_0}ER_2}\bra{\xi,0}\ot \ket{0}_{R_1}\bra{0}\right)\right)\\
   \stackrel{(a)}{=}& \frac{2^{\lambda}}{d_{{A_0}}}  \sum_a   \ket{a}_{A_0}\bra{a} \ot  \tr_E \left( \sum_{i=1}^{2^{m_A+m_B}} \lambda_i^a \ket{\xi_i}_{{B_r}ER_2}\ot \ket{\zeta_i}_{A_rR_1}
   \sum_{j=1}^{2^{m_A+m_B}} \lambda_j^a \bra{\xi_j}_{{B_r}ER_2}\ot \bra{\zeta_j}_{A_rR_1} \right),
    \end{align*}
    where $(a)$ follows from Lemma~\ref{lemma:Yao} and the coefficients $\lambda_j^a$ depend on the classical   $a$. Consequently,
    \begin{align*}
    \sigma_{{A_0}{B}_r}=\tr_{A_rR_1R_2} \sigma_{{A_0}A_r{B}_r{R_1R_2}} \leq& \frac{2^{\lambda}}{d_{{A_0}}}  \sum_a   \ket{a}_{A_0}\bra{a} \ot  \tr_{ER_2} \left( \sum_{i=1}^{2^{m_A+m_B}} \left(\lambda_i^a\right)^2  \ket{\xi_i}_{{B_r}ER_2}\bra{\xi_i}\right)\\
    \leq& \frac{2^{\lambda +m_A+m_B}}{d_{{A_0}}}  \sum_a   \ket{a}_{A_0}\bra{a} \ot  \tr_{ER_2} \left(\frac{1}{2^{m_A+m_B}} \sum_{i=1}^{2^{m_A+m_B}}   \ket{\xi_i}_{{B_r}ER_2}\bra{\xi_i}\right)\\
    =&\frac{2^{\lambda +m_A+m_B}}{d_{{A_0}}}  \sum_a   \ket{a}_{A_0}\bra{a} \ot \omega_{B_r},
    \end{align*}
    where $\omega_{B_r}=  \tr_{ER_2} \left(\frac{1}{2^{m_A+m_B}} \sum_{i=1}^{2^{m_A+m_B}}   \ket{\xi_i}_{{B_r}ER_2}\bra{\xi_i}\right)$.
    Therefore, we have, by Definition~\ref{def:minentropy},
    \[
    \Hmin({A_0}|{B}_r)_{\sigma}\geq \Hmin({A_0}|{B_0})_\rho -( m_{B}+m_{A}).
    \]

Each round of the interactive protocol consists of the following steps:
\begin{enumerate}
  \item Bob performs a unitary operation on his qubits.
  \item Bob sends some qubits   to Alice.
  \item Alice performs a (classical-controlled) quantum operation on her qubits.
  \item Alice sends some qubits to Bob.
\end{enumerate}
Note that only when Alice sends qubits Bob does the min-entropy change
and by  Lemma~\ref{lemma:LCR2}, the entropy decreases by at most two for each qubit that Alice sends to Bob. Thus, we have
    \begin{align*}
    &\Hmin(A_0|B_r)_{\sigma}\geq \Hmin(A_0|B_0)_\rho -  2m_{A}. %
    \end{align*}

\end{proof}

In fact, interactive leakage chain rule can be strengthened to allow pre-shared entanglement between
Alice and Bob by considering only the one-way communication complexity from Alice to Bob.
\bt[Interactive leakage chain rule for quantum min-entropy with pre-shared entanglement] \label{lemma:interactiveLCR2}
  Suppose Alice and Bob share an initial state  $\rho_{A_0B_0}=\ket{\Phi^+}_{A_0''B_0''}^{\otimes m}\bra{\Phi^+}^{\otimes m}\otimes \rho_{A_0'B_0'}$,
  where $\cA_0=\cA_0'\otimes\cA_0''$, $\cB_0=\cB_0'\otimes\cB_0''$, $\ket{\Phi^+}^{\otimes m}$ are EPR pairs, and $\rho_{A_0'B_0'}$ is a cq state.
  If  an $(r, m_A, m_B)$ two-party    interactive protocol $\Pi$, where the quantum operations  for $\rA$ are classically controlled by $A_0$,
 is  executed by Alice and Bob with $m_A\leq m$,
then
    \begin{align}
    &\Hmin(A_0|B_r)_{\sigma}\geq \Hmin(A_0|B_0)_\rho -  2m_{A}, %
    \end{align}
where $\sigma_{A_0B_r}= \tr_{A_r}\left[\rA\circledast\rB\right](\rho_{A_0B_0})$.
\et


\subsection{Communication Lower Bound} \label{sec:Comm}
In the problem of classical communication over (two-way) quantum channels, Alice wishes to send $n$  classical bits $X$ to Bob,
who then applies a quantum measurement and observes outcome $Y$. The famous Holevo theorem~\cite{Hol73} established a lower bound that
the mutual information between $X$ and $Y$ is at most $m$ if $m$ qubits are sent from Alice to Bob.
Cleve \etal extended the Holevo theorem to interactive protocols~\cite[Theorem 2]{CDNT99}: for Bob to acquire $m$ bits of mutual information, Alice has to send at least $m/2$ qubits to Bob
and the two-way communication complexity is at least $m$ qubits.
Nayak and Salzman further improved these results in that Bob only recovers $X$ with probability $p$~\cite{NS06}.

Herein we provide another version of the classical communication lower bound.
Our results are more general since we allow the initial shared states to be separable.

\bc \label{cor:comm_bound}
  Suppose Alice and Bob share a  cq  state $\rho = \rho_{A_0B_0}=\sum_{a}p_a \ket{a}_{A_0}\bra{a}\otimes \rho_{B_0}^{a} \in D(\cA_0\ot\cB_0)$,
  where Alice holds system $A_0$ of classical information and Bob holds system $B_0$. 
Suppose Alice wants to send $a$ to Bob by  an $(r, m_A, m_B)$ interactive protocol $\Pi$
such that Bob can recover $a$ with probability at least $p\in(0,1]$.
Then
    \begin{align}
    m_{B}+m_{A} &\geq \Hmin(A_0|B_0)_\rho- \log \frac{1}{p}; \label{eq:abn}\\
    2m_{A} &\geq  \Hmin(A_0|B_0)_\rho- \log \frac{1}{p}.\label{eq:2an}
    \end{align}

\ec

\begin{remark}
A protocol that uses the superdense coding techniques~\cite{BW92} can achieve Eqs.~(\ref{eq:abn}) and (\ref{eq:2an}) with equalities.
\end{remark}

\begin{remark}
As an application, we can recover the communication lower bounds by Nayak and Salzman~\cite[Theorems 1.1 and 1.3]{NS06}\footnote{Nayak and Salzman have another stronger result \cite[Theorems 1.4]{NS06} when there is no initial correlation between Alice and Bob.} when $\Hmin(A_0|B_0)_\rho=n$, where $A_0$ is of $n$ bits.
Note that they did a round reduction argument by using Yao's lemma so that the two-party protocol can be simulated by Alice sending a \emph{single} message of length $(m_A+m_B)$ to Bob.
However, this method requires a compression and decompression procedure, which unlikely generalizes to the case with initial correlations.
\end{remark}


\section{Conclusion}
We proved an interactive leakage chain rule for quantum min-entropy and discussed its applications in quantum communication complexity of classical information
and the lower bounds for quantum private information retrieval.
We may also apply our result to other scenarios. For example, our
 we can also derive limitations for information-theoretically secure quantum fully homomorphic encryption~\cite{LC18,NS18,Newman18},
 where the essential ingredient of the proof is  Nayak's bound~\cite{Nayak99}.
 To be more specific, instead of using Nayak's bound, we can use the communication lower (Corollary~\ref{cor:comm_bound}) derived by the interactive leakage chain rule~(Theorem~\ref{lemma:interactiveLCR}) to develop new limitations. This is our ongoing research.

CYL was was financially supported from the Young Scholar Fellowship Program by Ministry of Science and Technology (MOST) in Taiwan, under
Grant MOST107-2636-E-009-005.
KMC was partially supported by 2016 Academia Sinica Career Development Award under Grant
No. 23-17 and the Ministry of Science and Technology, Taiwan under Grant No. MOST 103-2221-
E-001-022-MY3.



\end{document}